\definecolor{darkred}{rgb}{0.8,0.1,0.1}
\theoremstyle{plain}
\newtheorem{theo}{Theorem}[section]
\theoremstyle{definition}
\newtheorem{rem}[theo]{Remark}
\numberwithin{equation}{section}
\def\Diff{\mathrm{Diff}}
\def\Man{\mathsf{Man}_{2}}
\def\Ab{\mathsf{Ab}}
\def\CAlg{C^\ast\mathsf{Alg}}
\def\Loc{\mathsf{Loc}}
\def\dd{\mathrm{d}}
\def\op{\mathrm{op}}
\def\id{\mathrm{id}}
\def\cc{\mathrm{c}}
\def\OO{\mathcal{O}}
\def\AA{\mathcal{A}}
\def\bbZ{\mathbb{Z}}
\def\bbR{\mathbb{R}}
\def\bbC{\mathbb{C}}
\def\bbT{\mathbb{T}}
\def\bbS{\mathbb{S}}
\def\1{\mathbbm{1}}
\def\sk{\vspace{2mm}}
\newcommand{\ips}[2]{\langle #1,#2\rangle}
\newcommand{\bol}[1]{\mathbf{#1}}
\title{%
Non-existence of natural states for Abelian Chern-Simons theory
}
\author{%
Claudio Dappiaggi$^{1,a}$, \ Simone Murro$^{2,b}$ \ and \  Alexander Schenkel$^{3,c}$ \vspace{4mm}\\
{\small $^1$ Dipartimento di Fisica, Universit{\`a} di Pavia \& INFN, sezione di Pavia,}\\ 
{\small  Via Bassi 6, 27100 Pavia, Italy.}\vspace{2mm}\\
{\small $^2$ Fakult\"at f\"ur Mathematik, Universit\"at Regensburg, 93040 Regensburg, Germany.}\vspace{2mm}\\
{\small $^3$ School of Mathematical Sciences, University of Nottingham,}\\
{\small University Park, Nottingham NG7 2RD, United Kingdom.}\vspace{3mm}\\
{\footnotesize $^a$\texttt{claudio.dappiaggi@unipv.it}, $^b$\texttt{simone.murro@ur.de}, $^c$\texttt{alexander.schenkel@nottingham.ac.uk}}
 }
\date{February 2017}
\begin{document}

\maketitle

\begin{abstract}
We give an elementary proof that Abelian Chern-Simons theory, described as a functor from oriented surfaces to $C^\ast$-algebras, does not admit a natural state. Non-existence of natural states is thus not only a phenomenon of quantum field theories on Lorentzian manifolds, but also of topological quantum field theories formulated in the algebraic approach.
\end{abstract}

\section{\label{sec:intro}Introduction and summary}
A locally covariant quantum field theory (LCQFT) \cite{BFV} is a functor
$\AA :\Loc\to \CAlg$ from a category of (globally hyperbolic Lorentzian) spacetimes
to the category of $C^\ast$-algebras satisfying suitable physical axioms. 
The $C^\ast$-algebra $\AA(M)$ assigned to a spacetime $M$ is interpreted as the
algebra of quantum observables which can be measured in $M$. The $C^\ast$-algebra
homomorphism $\AA(f) : \AA(M)\to \AA(M^\prime)$ assigned to a spacetime embedding
$f:M\to M^\prime$ allows us to associate observables in larger spacetimes starting from observables in smaller ones.
\sk

For a quantum physical interpretation of a LCQFT $\AA :\Loc\to \CAlg$ it is 
necessary to choose for each spacetime $M$ a state  $\omega_M : \AA(M)\to \bbC$, 
i.e.\ a positive, normalized and continuous linear functional. The GNS-representation then leads to the 
usual Hilbert space formulation of quantum physics.
Motivated by the functorial structure of LCQFT, 
it seems natural to demand that the family of states $\{\omega_M\}_{M\in\Loc}$ 
is compatible with the functor $\AA :\Loc\to \CAlg$ in the sense that
\begin{flalign}
\omega_{M^\prime}\circ \AA(f) = \omega_M~,
\end{flalign}
for all $\Loc$-morphisms $f:M\to M^\prime$. Such compatible 
families of states are called {\em natural states} on $\AA: \Loc\to \CAlg$.
\sk

Even though the idea of natural states is very beautiful and appealing, there are hard obstructions 
to the existence of natural states. Early arguments
were already given by Brunetti, Fredenhagen and Verch in \cite{BFV}. Later, a no-go theorem
on the existence of natural states (under some additional assumptions) has been proven 
by Fewster and Verch in \cite{FV}. This no-go theorem makes use of very particular
properties of dynamical quantum field theories on Lorentzian spacetimes,
e.g.\ the concept of relative Cauchy evolution. 
\sk

As a consequence, it is not clear whether such no-go result holds true also for 
other classes of quantum field theories, such as topological ones, 
since these do not satisfy the hypotheses of \cite{FV}. 
In this short paper we address this question by considering Abelian Chern-Simons theory and 
proving that it does not admit a natural state. This result shows in particular that non-existence 
of natural states is not directly linked to quantum field theories on Lorentzian manifolds, 
as the proof of the no-go theorem in  \cite{FV} may suggest. 
It is a more general and broader phenomenon which may also occur in 
other kinds of quantum field theories, including the class of 
topological quantum field theories.
\sk

The outline for the remainder of the paper is as follows:
In Section \ref{sec:CStheory} we define the
Abelian Chern-Simons theory functor $\AA : \Man\to\CAlg$ by 
adapting the construction \cite{BDHS} of Abelian Yang-Mills theory.
This functor assigns $C^\ast$-algebras to $3$-dimensional manifolds of the form $\bbR\times \Sigma$,
where $\Sigma$ is a $2$-dimensional oriented manifold.
In Section \ref{sec:compact} we analyze the induced representation 
of the orientation preserving diffeomorphism group $\Diff^+(\Sigma)$ on
$\AA(\Sigma)$ and relate it to a representation of the mapping class group and, for
compact $\Sigma$, also to a representation of a suitable discrete symplectic group.
As a by-product, we show that there exists a $\Diff^+(\Sigma)$-invariant state 
on $\AA(\Sigma)$ for all compact $\Sigma$. Section \ref{sec:noncompact} contains
a proof of our main result that Abelian Chern-Simons theory does not admit a natural state.

\section{\label{sec:CStheory}Abelian Chern-Simons theory}
Let $M$ be a $3$-dimensional manifold which is diffeomorphic to a product
manifold $\bbR\times \Sigma$, where $\Sigma$ is a $2$-dimensional surface (possibly non-compact). 
Abelian Chern-Simons theory on $M$ is characterized by the action functional
\begin{flalign}\label{eqn:CSaction}
S_M^{} = \frac{1}{2} \int_M A\wedge \dd A~,
\end{flalign}
where $A\in \Omega^1(M)$ is the vector potential of a $U(1)$-connection.
The Euler-Lagrange equation corresponding to \eqref{eqn:CSaction} is given by
$\dd A =0$, i.e.\ the solution space of Abelian Chern-Simons theory
is the space of flat $U(1)$-connections $\Omega_{\dd}^1(M)$.
Taking the quotient by gauge transformations $A \mapsto A + \frac{1}{2\pi i}\, \dd\log (g)$, 
where $g\in C^\infty(M;U(1))$ is a circle-valued function, we obtain the gauge orbit space
$\Omega_{\dd}^1(M)/\Omega^1_{\bbZ}(M)$. Here $\Omega^1_{\bbZ}(M)$ denotes the
Abelian group of closed $1$-forms with integral periods.
By de Rham's theorem, this gauge orbit space is naturally isomorphic to the quotient
$H^1(M;\bbR)/ H^1(M;\bbZ)$ of the first cohomology group with real coefficients with respect 
to the first cohomology group with integer coefficients. 
Using homotopy invariance of cohomology groups,
we may contract away the real line $\bbR$ of the product manifold $M\simeq \bbR\times \Sigma$
and describe our gauge orbit space of interest from perspective of the surface $\Sigma$.
We shall use the notation
\begin{flalign}\label{eqn:modulispace}
\mathrm{Flat}_{U(1)}(\Sigma) := \frac{\Omega_{\dd}^1(\Sigma)}{\Omega^1_{\bbZ}(\Sigma)}~.
\end{flalign}
Let $\Man$ be the category of $2$-dimensional oriented manifolds (possibly non-compact) 
with morphisms given by orientation preserving open embeddings. Let $\Ab$ be the category of Abelian groups.
The gauge orbit spaces \eqref{eqn:modulispace} can be described by a functor
\begin{flalign}\label{eqn:modulifunctor}
\mathrm{Flat}_{U(1)}\, : \, \Man^\op\,  \longrightarrow \, \Ab\, ~,
\end{flalign}
which assigns to a morphism $f : \Sigma \to \Sigma^\prime$ in $\Man$ the corresponding pull-back
of differential forms, i.e.\
\begin{flalign}\label{eqn:modulimorphism}
\mathrm{Flat}_{U(1)}(f) := f^\ast\, :\,  \frac{\Omega_{\dd}^1(\Sigma^\prime)}{\Omega^1_{\bbZ}(\Sigma^\prime) }
\, \longrightarrow \, \frac{\Omega_{\dd}^1(\Sigma)}{\Omega^1_{\bbZ}(\Sigma)}\,~,~~ [A^\prime]\longmapsto [f^\ast A^\prime]~.
\end{flalign}

As basic observables for Abelian Chern-Simons theory we shall take all group characters
on $\mathrm{Flat}_{U(1)}(\Sigma)$, i.e.\ all group homomorphisms
$\mathrm{Flat}_{U(1)} (\Sigma) \to U(1)$ to the circle group.
The character group has a convenient description in terms of compactly supported differential forms
on $\Sigma$: Given any compactly supported $1$-form $\varphi \in \Omega^1_{\cc}(\Sigma)$,
we define a group character on $\Omega^1_{\dd}(\Sigma)$ by setting
\begin{flalign}\label{eqn:character}
\Omega^1_{\dd}(\Sigma) \longrightarrow U(1)~,~~A\longmapsto \exp\Big(2\pi i\,\int_{\Sigma} \varphi\wedge A\Big)~.
\end{flalign}
This character descends to the quotient \eqref{eqn:modulispace} if and only if
\begin{flalign}\label{eqn:integralcondition}
\int_{\Sigma}\varphi\wedge \Omega^1_{\bbZ}(\Sigma) \subseteq  \bbZ~.
\end{flalign}
Because $\dd\Omega^0(\Sigma)\subseteq \Omega^1_{\bbZ}(\Sigma)$ is a subgroup,
Stokes' lemma implies that any $\varphi\in\Omega^1_{\cc}(\Sigma)$ satisfying condition \eqref{eqn:integralcondition} has to be closed, i.e.\ $\varphi\in \Omega^1_{\cc,\,\dd}(\Sigma)$. 
In addition \eqref{eqn:integralcondition} entails an integrality condition 
\begin{flalign}\label{eqn:integralitycoho}
\ips{[\varphi]}{H^1(\Sigma;\bbZ)}^{}_\Sigma \subseteq \bbZ~
\end{flalign}
for the compactly supported cohomology class 
$[\varphi]\in H^1_{\cc}(\Sigma;\bbR):= \Omega^1_{\cc,\,\dd}(\Sigma)/\dd\Omega^0_\cc(\Sigma)$
of $\varphi$. Here 
\begin{flalign}
\ips{\,\cdot\,}{\,\cdot\,}_{\Sigma}^{} : H^1_{\cc}(\Sigma;\bbR)\times H^1(\Sigma;\bbR)\longrightarrow\bbR~,~~
([\varphi],[\omega])\longmapsto  \int_{\Sigma}\varphi\wedge \omega
\end{flalign}
is the non-degenerate pairing between compactly support and ordinary (de Rham) cohomology.
Because each exact $\varphi = \dd\chi\in\dd\Omega^0_{\cc}(\Sigma)$ yields a trivial group character \eqref{eqn:character},
it follows that the character group of $\mathrm{Flat}_{U(1)}(\Sigma)$ is isomorphic to
\begin{flalign}\label{eqn:charactergroup}
H^1_{\cc}(\Sigma;\bbZ) := \Big\{ [\varphi]\in H^1_{\cc}(\Sigma;\bbR) \, : \, \text{\eqref{eqn:integralitycoho} is satisfied}\Big\} ~.
\end{flalign}
The assignment of the character groups \eqref{eqn:charactergroup} is a functor
\begin{flalign}\label{eqn:characterfunctor}
H^1_{\cc}(-;\bbZ)\,:\, \Man \, \longrightarrow \, \Ab \,~,
\end{flalign}
which assigns to a morphism $f : \Sigma \to \Sigma^\prime$ in  $\Man$ the corresponding push-forward (extension by zero) 
of compactly supported differential forms, i.e.\
\begin{flalign}\label{eqn:charactermorphism}
H^1_{\cc}(f;\bbZ) := f_\ast \,:\,  H^1_{\cc}(\Sigma;\bbZ) \, \longrightarrow\,  H^1_{\cc}(\Sigma^\prime;\bbZ)\, ~,~~
[\varphi] \longmapsto [f_\ast\varphi]~.
\end{flalign}
The Abelian groups $H^1_{\cc}(\Sigma;\bbZ)$ can be equipped with a natural presymplectic structure
given by
\begin{flalign}\label{eqn:presymplectic}
\tau_{\Sigma}^{} : H^1_{\cc}(\Sigma;\bbZ)\times H^1_{\cc}(\Sigma;\bbZ) \longrightarrow \bbR~,~~([\varphi],[\widetilde{\varphi}])\longmapsto
\ips{[\varphi]}{[\widetilde{\varphi}]}_{\Sigma}^{}= \int_{\Sigma} \varphi\wedge \widetilde{\varphi}~.
\end{flalign}
(This presymplectic structure can be derived from the action functional \eqref{eqn:CSaction} 
by using Zuckerman's construction \cite{Zuckerman}.)
Given any $\Man$-morphism $f:\Sigma\to \Sigma^\prime$, we find that
\begin{flalign}
\tau_{\Sigma^\prime}^{} \big(f_\ast[\varphi],f_\ast[\widetilde{\varphi}]\big) = 
\int_{\Sigma^\prime} (f_\ast \varphi)\wedge (f_\ast \widetilde{\varphi}) = 
\int_{\Sigma} \varphi\wedge (f^\ast f_\ast \widetilde{\varphi}) = 
\int_{\Sigma} \varphi\wedge \widetilde{\varphi} = \tau_{\Sigma}^{}\big([\varphi],[\widetilde{\varphi}]\big) ~,
\end{flalign}
because $f$ preserves the orientations and $f^\ast f_\ast\widetilde{\varphi} = \widetilde{\varphi}$,
for all $\widetilde{\varphi}\in \Omega^1_{\cc}(\Sigma)$. Hence, \eqref{eqn:characterfunctor} can be
promoted to a functor 
\begin{flalign}\label{eqn:Ofunctor}
\OO := \big(H^1_{\cc}(-;\bbZ), \tau \big) \,:\, \Man \, \longrightarrow \,\mathsf{PAb}
\end{flalign}
with values in the category of presymplectic Abelian groups.
\sk

Quantization of Abelian Chern-Simons theory is achieved by composing the functor
\eqref{eqn:Ofunctor} with the canonical commutation relation functor
$\mathfrak{CCR} : \mathsf{PAb}\to \CAlg $ that assigns $C^\ast$-algebras to presymplectic Abelian groups,
see \cite{Manuceau} and for more details also the Appendix of \cite{BDHS}.
We shall denote the resulting functor by
\begin{flalign}\label{eqn:QFTfunctor}
\AA := \mathfrak{CCR}\circ \OO \,:\, \Man \longrightarrow \CAlg~.
\end{flalign}
Concretely, the $C^\ast$-algebra $\AA(\Sigma)$ is the $C^\ast$-completion of the $\ast$-algebra
generated by the Weyl symbols $W_{[\varphi]}$, for all $[\varphi]\in H^1_{\cc}(\Sigma;\bbZ)$,
satisfying the usual relations
\begin{flalign}\label{eqnWeylrel}
W_{[\varphi]}\, W_{[\widetilde{\varphi}]} := e^{-i \hbar \,\tau_{\Sigma}^{}([\varphi],[\widetilde{\varphi}])}~W_{[\varphi]+ [\widetilde{\varphi}]}\quad,\qquad {W_{[\varphi]}}^\ast := W_{-[\varphi]}~~.
\end{flalign}
We shall always assume that $\hbar \not\in 2\pi\bbZ$ to avoid trivial exponentials in \eqref{eqnWeylrel}, i.e.\ 
commutative $C^\ast$-algebras.
The $C^\ast$-algebra homomorphism $\AA(f) : \AA(\Sigma)\to\AA(\Sigma^\prime)$
is specified by
\begin{flalign}\label{eqn:explicitactionWeyl}
\AA(f) \big(W_{[\varphi]}\big) := W^\prime_{f_\ast[\varphi]}~~,
\end{flalign}
for all $[\varphi]\in H^1_{\cc}(\Sigma;\bbZ)$, where by $W^\prime$ we denote the Weyl symbols
in $\AA(\Sigma^\prime)$.

\section{\label{sec:compact}Invariant states on compact surfaces}
Any object $\Sigma$ in $\Man$ comes together
with its automorphism group, which is the group of orientation preserving diffeomorphisms
$\Diff^+(\Sigma)$ of $\Sigma$. The functor \eqref{eqn:QFTfunctor} 
defines a representation of $\Diff^+(\Sigma)$ on $\AA(\Sigma)$ in terms of 
$C^\ast$-algebra automorphisms, i.e.\
\begin{flalign}\label{eqn:diffeorepresentation}
\Diff^+(\Sigma)\longrightarrow \mathrm{Aut}(\AA(\Sigma))~,~~f \longmapsto \AA(f) ~.
\end{flalign}
Because of \eqref{eqn:explicitactionWeyl} and of the fact that $H^1_{\cc}(\Sigma;\bbZ)$ is discrete,
the identity component $\Diff^+_0(\Sigma)\subseteq \Diff^+(\Sigma)$ 
is represented trivially. Hence, the representation \eqref{eqn:diffeorepresentation} descends
to the mapping class group
\begin{flalign}\label{eqn:MCGrepresentation}
\mathrm{MCG}(\Sigma) := \frac{\Diff^+(\Sigma)}{\Diff^+_0(\Sigma)} 
\longrightarrow \mathrm{Aut}(\AA(\Sigma))~,~~[f]\longmapsto \AA(f)~.
\end{flalign}
For compact $\Sigma$, there exists by \cite[Chapter 6]{MCG} a short exact sequence of groups
\begin{flalign}
\xymatrix{
1 ~\ar[r] & ~\mathrm{Tor}(\Sigma) ~\ar[r] & ~\mathrm{MCG}(\Sigma)~ \ar[r] & ~\mathrm{Sp}(H^1(\Sigma;\bbZ),\tau_{\Sigma})~\ar[r]&~1
}~~,
\end{flalign}
where $\mathrm{Tor}(\Sigma)$ is the so-called Torelli group and $\mathrm{Sp}(H^1(\Sigma;\bbZ),\tau_{\Sigma})$ 
is the symplectic group. (Notice that $H_{\cc}^1(\Sigma;\bbZ) = H^1(\Sigma;\bbZ)$ for compact $\Sigma$.)
Because the Torelli group is by definition the kernel of the representation 
of the mapping class group on homology, the representation \eqref{eqn:MCGrepresentation} 
descends further to a representation of $\mathrm{Sp}(H^1(\Sigma;\bbZ),\tau_{\Sigma})$, 
for all compact $\Sigma$. Explicitly,
\begin{subequations}\label{eqn:Taction}
\begin{flalign}
\mathrm{Sp}(H^1(\Sigma;\bbZ),\tau_{\Sigma}) \longrightarrow \mathrm{Aut}(\AA(\Sigma))~,~~T\longmapsto \kappa_T~,
\end{flalign}
where the $C^\ast$-algebra automorphism $\kappa_T$ is specified by
\begin{flalign}
\kappa_{T}^{} : \AA(\Sigma) \longrightarrow \AA(\Sigma)~,~~W_{[\varphi]}\longmapsto W_{T[\varphi]}~,
\end{flalign}
\end{subequations}
for all $T\in \mathrm{Sp}(H^1(\Sigma;\bbZ),\tau_{\Sigma})$.
\sk

As a consequence, we obtain that, for compact $\Sigma$,
a state $\omega : \AA(\Sigma) \longrightarrow \bbC$
is invariant under the 
action \eqref{eqn:diffeorepresentation} of the orientation 
preserving diffeomorphism group $\Diff^+(\Sigma)$ if and only if it satisfies
\begin{flalign}
\omega\circ \kappa_T = \omega~,
\end{flalign}
for all $T\in  \mathrm{Sp}(H^1(\Sigma;\bbZ),\tau_{\Sigma})$.
Notice that there exists an invariant state given by
\begin{flalign}\label{eqn:invariantstate}
\omega\big(W_{[\varphi]}\big) := \begin{cases}
1 & ~,~~\text{if }[\varphi]=0~,\\
0 & ~,~~\text{else}~,
\end{cases}
\end{flalign}
for any compact $\Sigma$.
\begin{rem}
It would be interesting to find out whether \eqref{eqn:invariantstate} is the only 
$\Diff^+(\Sigma)$-invariant state on $\AA(\Sigma)$, for $\Sigma$ compact.
Unfortunately, it seems to be rather hard to answer this question in full generality. As a partial result
in this direction, we provide an argument why there {\em does not} exist an invariant Gaussian state 
on $\AA(\Sigma)$. Recall that a Gaussian state is of the form
\begin{flalign}
\omega\big(W_{[\varphi]}\big) = e^{- \mu([\varphi],[\varphi])}~,
\end{flalign}
for all $[\varphi]\in H^1(M;\bbZ)$, where
\begin{flalign}
\mu : H^1(M;\bbZ)\times H^1(M;\bbZ)\longrightarrow \bbR
\end{flalign}
is a symmetric and positive-definite bilinear form satisfying a 
lower bound which depends on the symplectic structure $\tau_{\Sigma}$, see e.g.\ \cite{Manuceau2}.
Such state is invariant under $\Diff^+(\Sigma)$ if and only if
$\mu$ is invariant under  $\mathrm{Sp}(H^1(\Sigma;\bbZ),\tau_{\Sigma})$.
This is however impossible because $\mathrm{Sp}(H^1(\Sigma;\bbZ),\tau_{\Sigma})$
is not a subgroup of the orthogonal group  $\mathrm{O}(H^1(\Sigma;\bbZ),\mu)$.
\end{rem}

\section{\label{sec:noncompact}Non-existence of a natural state}
A  {\em natural state} for the Abelian Chern-Simons theory
functor $\AA :\Man \to \CAlg$ is a state $\omega_{\Sigma}^{} : \AA(\Sigma)\to \bbC$ 
for each object $\Sigma$ in $\Man$,
such that for all $\Man$-morphisms $f : \Sigma\to \Sigma^\prime$
the consistency condition
\begin{flalign}
\omega_{\Sigma^\prime}^{} \circ \AA(f) = \omega_{\Sigma}^{}~
\end{flalign}
holds true. In particular, this condition implies that for each
$\Sigma$ the state $\omega_{\Sigma}^{}$ has 
to be invariant under the action \eqref{eqn:diffeorepresentation}
of the orientation preserving diffeomorphism group of $\Sigma$.
\begin{theo}\label{theo:nonaturalstate}
There exists no natural state for the Abelian Chern-Simons theory functor $\AA :\Man \to \CAlg$.
\end{theo}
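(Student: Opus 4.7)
The plan is to derive a contradiction by forcing the would-be state on the $2$-torus to equal $1$ on every primitive Weyl generator, and then showing this violates positivity. This is achieved by combining an orientation-preserving open embedding of the annulus into $\bbR^2$ with embeddings of the annulus as tubular neighbourhoods of simple closed curves on $T^2$.

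For any $\Sigma$ in $\Man$ that admits an orientation-preserving open embedding $\iota : \Sigma \hookrightarrow \bbR^2$, the functoriality condition $\omega_{\Sigma}(W_{[\varphi]}) = \omega_{\bbR^2}(W_{\iota_{\ast}[\varphi]})$ combined with $H^1_{\cc}(\bbR^2;\bbZ) = 0$ (so $\AA(\bbR^2) = \bbC \cdot \1$ and $\omega_{\bbR^2}(\1) = 1$) yields $\omega_{\Sigma}(W_{[\varphi]}) = 1$ for every $[\varphi]$. Applying this to the open annulus $A = \bbS^1 \times (0,1)$, with generator $e$ of $H^1_{\cc}(A;\bbZ) = \bbZ$, gives $\omega_A(W_{n e}) = 1$ for every $n \in \bbZ$.

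Now for every oriented simple closed curve $\alpha \subset T^2$, choose an orientation-preserving tubular-neighbourhood embedding $j_{\alpha} : A \hookrightarrow T^2$. A direct computation with de Rham representatives shows that $(j_{\alpha})_{\ast} e \in H^1_{\cc}(T^2;\bbZ) = H^1(T^2;\bbZ)$ is the Poincar\'e dual of $[\alpha] \in H_1(T^2;\bbZ)$. Every primitive class in $H_1(T^2;\bbZ)$ is represented by a simple closed curve, and Poincar\'e duality sends primitive classes to primitive classes, so $(j_{\alpha})_{\ast} e$ ranges over all primitive elements of $H^1(T^2;\bbZ) \cong \bbZ^2$. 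Together with the previous paragraph, naturality then forces
\[ \omega_{T^2}(W_g) = 1 \qquad \text{for every primitive } g \in H^1(T^2;\bbZ). \]

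Finally, I derive a contradiction by testing positivity of $\omega_{T^2}$ on three elements. Choose $g_1 = 0$ and a symplectic basis $(g_2, g_3)$ of $H^1(T^2;\bbZ)$ with $\tau_{T^2}(g_2, g_3) = 1$. All nonzero differences $g_j - g_i$ are primitive, hence $\omega_{T^2}(W_{g_j - g_i}) = 1$ by the previous step. By the Weyl relations \eqref{eqnWeylrel}, the Hermitian Gram matrix equals
\[ M = \bigl( e^{i\hbar\,\tau_{T^2}(g_i,g_j)}\,\omega_{T^2}(W_{g_j-g_i}) \bigr)_{i,j} = \begin{pmatrix} 1 & 1 & 1 \\ 1 & 1 & e^{i\hbar} \\ 1 & e^{-i\hbar} & 1 \end{pmatrix}, \]
and a direct expansion yields $\det M = 2(\cos\hbar - 1)$, strictly negative since $\hbar \notin 2\pi\bbZ$. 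Positive semidefiniteness of $M$ fails, contradicting positivity of $\omega_{T^2}$. The main obstacle in this plan is the geometric identification of $(j_{\alpha})_{\ast} e$ with the Poincar\'e dual of $[\alpha]$ and the fact that every primitive cohomology class on $T^2$ is realised this way; both are standard surface topology, but require a careful choice of orientations and normalisations.
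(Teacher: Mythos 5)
Your proposal is correct and follows essentially the same strategy as the paper's proof: force the state to equal $1$ on the (commutative) algebra of a cylinder/annulus by embedding it into a surface with vanishing $H^1_{\cc}$, transport this to the torus to pin down $\omega_{\bbT^2}$ on all primitive Weyl generators, and then exhibit the same $3\times 3$ Gram matrix violating positivity. The only cosmetic differences are that you use $\bbR^2$ in place of $\bbS^2$ and realise all primitive classes via tubular neighbourhoods of simple closed curves, whereas the paper uses a single embedding $\bbR\times\bbT\to\bbT^2$ combined with the $\mathrm{SL}(2,\bbZ)$-invariance that naturality forces on $\omega_{\bbT^2}$ --- two packagings of the same underlying mapping-class-group action.
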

\begin{proof}
Let us assume that there exists a natural state $\{\omega_{\Sigma}\}_{\Sigma\in\Man}$. Consider the
$\Man$-diagram 
\begin{flalign}\label{eqn:mapping}
\bbS^2 \stackrel{f_1}{\longleftarrow} \bbR\times \bbT\stackrel{f_2}{\longrightarrow} \bbT^2
\end{flalign}
describing an orientation preserving open embedding of the cylinder $\bbR\times\bbT$ into the $2$-sphere $\bbS^2$ and 
the $2$-torus $\bbT^2$. The Chern-Simons functor $\AA :\Man \to \CAlg$ assigns
the $\CAlg$-diagram $\AA(\bbS^2) \stackrel{\AA(f_1)}{\longleftarrow} \AA(\bbR\times \bbT)
\stackrel{\AA(f_2)}{\longrightarrow} \AA(\bbT^2)$ and naturality of the state implies the condition
\begin{flalign}\label{eqn:statecondition}
\omega_{\bbS^2}^{}\circ \AA(f_1) = \omega_{\bbR\times\bbT}^{} = \omega_{\bbT^2}^{}\circ \AA(f_2)~.
\end{flalign}
Because of $H^1_{\cc}(\bbS^2;\bbZ)=0$, it follows that $\AA(\bbS^2) \simeq \bbC$
and hence $\omega_{\bbS^2}^{} = \id_{\bbC}$ has to be the unique state on $\bbC$.
Using further that $H^1_\cc(\bbR\times \bbT) \simeq \bbZ$, it follows
that $\AA(\bbR\times\bbT)\simeq \mathfrak{CCR}(\bbZ,0)$, and the first equality 
in \eqref{eqn:statecondition} implies
\begin{flalign}\label{eqn:condition1}
\omega_{\bbR\times\bbT}^{}\big(W^{\bbR\times\bbT}_n\big)= 1~,
\end{flalign}
for all $n\in\bbZ$. (The superscript on the Weyl symbols refers to the algebras they live in.)
\sk

It is easy to show that $(H^1_{\cc}(\bbT^2;\bbZ),\tau_{\bbT^2})$ is isomorphic to the Abelian group
$\bbZ^2$ with standard symplectic structure
\begin{flalign}\label{eqn:standardsymp}
\bbZ^2 \times \bbZ^2 \longrightarrow \bbR~,~~\bol{n} \times \bol{m}\longmapsto n_1\,m_2 - n_2\,m_1~,
\end{flalign}
where we used the notation $\bol{n}=(n_1,n_2)\in\bbZ^2$ and similarly for $\bol{m}$.
We can choose $f_2$ such that the 
$C^\ast$-algebra homomorphism $\AA(f_2) : \AA(\bbR\times\bbT)\to\AA(\bbT^2)$
is given by $W^{\bbR\times\bbT}_n \mapsto W^{\bbT^2}_{(n,0)}$, for all $n\in \bbZ$.
As a consequence of \eqref{eqn:statecondition} and \eqref{eqn:condition1}, we obtain that
\begin{flalign}
\omega_{\bbT^2}^{}\big(W^{\bbT^2}_{(n,0)}\big) =1~,
\end{flalign}
for all $n\in\bbZ$. Even more, because $\omega_{\bbT^2}^{}$ is by hypothesis
a component of a natural state and hence invariant under the action \eqref{eqn:diffeorepresentation} 
orientation preserving diffeomorphisms of $\bbT^2$, we obtain that
\begin{flalign}\label{eqn:condition2}
\omega_{\bbT^2}^{}\big(W^{\bbT^2}_{T (n,0)}\big) =1~,
\end{flalign}
for all $n\in \bbZ$ and all $T\in \mathrm{SL}(2,\bbZ) \simeq \mathrm{Sp}(H^1_{\cc}(\bbT^2;\bbZ),\tau_{\bbT^2})$, 
see also \eqref{eqn:Taction}.
\sk

We conclude the proof by showing that the necessary conditions \eqref{eqn:condition2}
are incompatible with positivity of $\omega_{\bbT^2}^{} : \AA(\bbT^2)\to\bbC$.
For this we consider the particular element 
$a = \alpha_1\,\1 + \alpha_2 \,W^{\bbT^2}_{(1,1)} + \alpha_3\, W^{\bbT^2}_{(0,1)} \in\AA(\bbT^2)$,
where $\alpha_1,\alpha_2,\alpha_3\in\bbC$ are arbitrary. Notice 
that $(1,1)$, $(0,1)$ and their difference $(1,0)$ are in the $\mathrm{SL}(2,\bbZ)$-orbit
of $\bbZ\times 0\subseteq \bbZ^2$, which allows us to use \eqref{eqn:condition2}.
Using also the Weyl relations \eqref{eqnWeylrel} and \eqref{eqn:standardsymp},
we can evaluate $\omega_{\bbT^2}^{}(a^\ast\,a)$ and obtain
\begin{flalign}\label{eqn:matrix}
\omega_{\bbT^2}^{}\big(a^\ast\,a\big) = \begin{pmatrix} \overline{\alpha_1}&\overline{\alpha_2}&\overline{\alpha_3}\end{pmatrix}
\begin{pmatrix}
1&1&1\\
1&1&e^{i\hbar}\\
1 & e^{-i\hbar} &1
\end{pmatrix}
\begin{pmatrix}
\alpha_1\\ \alpha_2\\ \alpha_3
\end{pmatrix}~~,
\end{flalign}
where $\overline{\,\cdot\,}$ denotes complex conjugation. Because $\hbar\not\in2\pi\bbZ$ (by assumption),
the matrix in \eqref{eqn:matrix} has a negative eigenvalue, which implies that $\omega_{\bbT^2}^{}$ is not positive
and hence not a state.
\end{proof}
\begin{rem}
Note that our argument in the proof of Theorem \ref{theo:nonaturalstate} relies
on the fact that the Chern-Simons functor $\AA : \Man \to \CAlg$ violates
the isotony axiom of locally covariant quantum field theory, i.e.\ there exist
$\Man$-morphisms $f : \Sigma\to\Sigma^\prime$ such that $\AA(f) : \AA(\Sigma)\to\AA(\Sigma^\prime)$
is not injective. An example is given by the map $f_1$ in \eqref{eqn:mapping}.
Violation of isotony seems to be a general feature of quantum gauge theories (see e.g.\ \cite{BDHS} for the
case of Abelian Yang-Mills theory), including our present example of Abelian Chern-Simons theory.
As all topological quantum field theories known to us are gauge theories,
we expect that similar arguments based on the violation of isotony could be used to develop
a more general argument why topological quantum field theories do not admit natural states.
We hope to come back to this issue in a future work.
\end{rem}

\section*{Acknowledgements}
We would like to thank Ulrich Bunke, Giuseppe De Nittis, Chris Fewster 
and Nicola Pinamonti for useful comments on this work. C.D.\ is supported by the University 
of Pavia and also would like to thank Felix Finster for the invitation to Regensburg, where 
part of this research has been performed.  S.M.\ is supported by the DFG Research Training Group 
GRK 1692 ``Curvature, Cycles, and Cohomology'' and also would like to thank Giuseppe 
De Nittis for the invitation to Santiago de Chile, where part of this research has been 
performed. A.S.\ gratefully acknowledges the financial support of the Royal Society 
(UK) through a Royal Society University Research Fellowship.


\begin{thebibliography}{10}
  
  
\bibitem[BDHS14]{BDHS}
  M.~Benini, C.~Dappiaggi, T.~P.~Hack and A.~Schenkel,
  ``A $C^\ast$-algebra for quantized principal $U(1)$-connections on globally hyperbolic Lorentzian manifolds,''
  Commun.\ Math.\ Phys.\  {\bf 332}, 477 (2014)
  [arXiv:1307.3052 [math-ph]].
  
  
\bibitem[BFV03]{BFV}
  R.~Brunetti, K.~Fredenhagen and R.~Verch,
  ``The Generally covariant locality principle: A New paradigm for local quantum field theory,''
  Commun.\ Math.\ Phys.\  {\bf 237}, 31 (2003)
  [math-ph/0112041].


\bibitem[FM12]{MCG}
B.~Farb and D.~Margalit,
``A primer on mapping class groups,''
Princeton University Press, Princeton, NJ (2012). 

  
\bibitem[FV12]{FV}
  C.~J.~Fewster and R.~Verch,
  ``Dynamical locality and covariance: What makes a physical theory the same in all spacetimes?,''
  Annales Henri Poincar{\'e} {\bf 13}, 1613 (2012)
  [arXiv:1106.4785 [math-ph]].
    
  
\bibitem[MSTV73]{Manuceau}
  J.~Manuceau, M.~Sirugue, D.~Testard and A.~Verbeure,
  ``The smallest $C^\ast$-algebra for canonical commutations relations,''
  Commun.\ Math.\ Phys.\  {\bf 32}, 231 (1973).

  
  \bibitem[MV68]{Manuceau2}
J.~Manuceau and A.~Verbeure, 
``Quasi-free states of the CCR-Algebra and Bogoliubov transformations," 
Commun.\ Math.\ Phys.\  {\bf 9}, 293 (1968).

  
\bibitem[Zuc86]{Zuckerman}
  G.~J.~Zuckerman,
  ``Action Principles And Global Geometry,''
  Conf.\ Proc.\ C {\bf 8607214}, 259 (1986).
    
\end{thebibliography}
\end{document}